\documentclass[conference]{IEEEtran}

\usepackage[letterpaper,left=0.680in,right=0.680in,bottom=0.990in,top=0.71in]{geometry}
\usepackage{ifpdf}
\ifCLASSINFOpdf
  \usepackage[pdftex]{graphicx}
  \graphicspath{{./figures/}}
  \DeclareGraphicsExtensions{.pdf,.png,.jpeg}
\else
  \usepackage[dvips]{graphicx}
  \DeclareGraphicsExtensions{.eps}
\fi
\usepackage{amsmath,amssymb,amsfonts}
\usepackage{amsthm}
\usepackage{booktabs}
\usepackage{array}
\usepackage{cite}
\usepackage{url}
\usepackage[caption=false,font=footnotesize]{subfig}
\usepackage{tikz}
\usetikzlibrary{arrows.meta,positioning,calc,shapes.geometric}

\newtheorem{proposition}{Proposition}
\newtheorem{lemma}{Lemma}
\newtheorem{remark}{Remark}

\begin{document}

\title{Fourth-Order Co-Channel Interference-Aware OFDM-ISAC Sensing in Cluttered Environments}

\author{
  \IEEEauthorblockN{Tri~Nhu~Do, Yosefine~Triwidyastuti, and Gunes~Karabulut~Kurt}
  \IEEEauthorblockA{Department of Electrical Engineering, Polytechnique Montr\'eal, Montr\'eal, QC, Canada\\
  \{tri-nhu.do, yosefine-2.triwidyastuti, gunes.kurt\}@polymtl.ca}
}

\maketitle

\begin{abstract}
In this paper, we develop a sensing-centric receiver for monostatic orthogonal frequency-division multiplexing integrated sensing and communication (OFDM-ISAC) in the simultaneous presence of passive clutter and a non-cooperative co-channel OFDM transmitter. We show that removing the known ISAC symbols maps target and clutter echoes to deterministic delay-Doppler components, while the non-cooperative waveform remains a random, non-Gaussian, generically full-rank residual. Guided by this distinction, we combine moving-target indication for zero-Doppler clutter with a centered diagonal fourth-order statistic for interference identification. To obtain an unbiased fourth-order estimate from phase-aligned proper snapshots, we derive additive and multiplicative finite-sample bias corrections and pair the corrected statistic with a second-order moment to separate interference from Gaussian noise. Fourth order provides identification and presence testing, whereas second order provides the diagonal disturbance-power proxy used for weighting. We further introduce diagonally loaded weights that enforce a range-Doppler aperture-efficiency floor. Simulations show that the interference estimate approaches its theoretical profile as coherent support grows and remains insensitive to deterministic echoes, that the presence test resolves interference well below the level at which weighting matters, and that loaded processing improves target detection while controlling point-spread-function degradation.
\end{abstract}

\begin{IEEEkeywords}
ISAC, OFDM radar, fourth-order statistics, co-channel interference, clutter, constant false-alarm rate detection.
\end{IEEEkeywords}

\section{Introduction}
\label{sec:intro}

Integrated sensing and communication (ISAC) reuses spectrum, radio hardware, and waveforms for data transmission and environmental sensing \cite{Triwidyastuti2026arxiv}. Orthogonal frequency-division multiplexing (OFDM) is particularly attractive because subcarrier and symbol indices encode delay and Doppler, respectively, and because a colocated sensing receiver knows the transmitted data \cite{MirabellaAccess2023,Thinh_TGCN_2026,Richards2014}. A monostatic transceiver can therefore form a two-dimensional range-Doppler image directly from its time-frequency grid. Most OFDM sensing models, however, emphasize additive noise or deterministic echoes. In practice, a monostatic receiver operates in a complex sensing environment with strong passive clutter and co-channel radiation from a non-cooperative transmitter \cite{MengTWC2024,Nguyen_TWC_arXiv_2026}. Clutter-aware ISAC processing has recently received attention \cite{LuoTWC2024}, while interference is often absorbed into an effective noise floor. That abstraction loses an important distinction: clutter is a reflection of the known waveform, but a co-channel interferer carries unknown data. Dividing the observation by the local reference symbols therefore affects the two disturbances differently.

OFDM studies have established waveform tradeoffs, delay-Doppler processing, and high-dimensional channel estimation \cite{MirabellaAccess2023,Thinh_TGCN_2026,Richards2014}, while clutter-aware ISAC uses spatial, Doppler, or mean-phasor cancellation \cite{LuoTWC2024}. Here the desired echoes still become complex exponentials, but a non-cooperative transmitter remains randomly modulated and violates that source model. Higher-order statistics distinguish this non-Gaussian residual from Gaussian noise \cite{MendelProcIEEE1991}; they are not used to create cumulant-bearing target steering vectors.
Classical higher-order statistics exploit the vanishing of cumulants above order two for Gaussian processes \cite{MendelProcIEEE1991}. Unlike blind source-separation formulations, however, the useful echoes here are deterministic and cumulant-free, while the only non-Gaussian component is outside the delay-Doppler manifold. This makes a cellwise statistic, rather than a tensor factorization, the appropriate construction.

In this paper, we exploit that difference from a sensing-centric perspective. We retain the conventional OFDM communication chain and focus on the base station (BS) sensing receiver. Our contributions are as follows.
\begin{itemize}
  \item We show that reference removal maps target and clutter echoes to deterministic low-rank delay-Doppler components, but maps a non-cooperative OFDM interferer to a random, non-Gaussian, generically full-rank residual.
  \item We propose a scalable diagonal fourth-order statistic that rejects deterministic echoes and Gaussian noise and supports interference-presence testing without prior knowledge of the noise level.
  \item We show that centering biases the statistic both additively and multiplicatively, derive both corrections in closed form to obtain an estimator that is unbiased for any proper disturbance when $M\geq4$, and combine it with a second-order moment to separate the interference and noise profiles.
  \item We distinguish interference identification from suppression and, under a diagonal post-MTI covariance approximation, propose loaded disturbance weighting that balances coherent gain and aperture efficiency while limiting point-spread-function degradation.
\end{itemize}

\textit{Notation}: $\vec{a}$ and $\mathbf{A}$ denote a vector and a matrix; $(\cdot)^{*}$, $(\cdot)^{T}$, and $(\cdot)^{H}$ are conjugation, transpose, and Hermitian transpose; $\odot$ is the Hadamard product; $\mathbf{I}_N$ and $\vec{1}_N$ are the identity matrix and all-ones vector of size $N$; $\operatorname{tr}(\cdot)$, $\|\cdot\|$, and $\lceil\cdot\rceil$ are the trace, Euclidean norm, and ceiling; $\mathbb{E}[\cdot]$ is expectation, $\operatorname{cum}_4(\cdot)$ the fourth-order cumulant, and $[x]_+=\max(x,0)$.

\section{System Model and Structural Distinction}
\label{sec:model}

\subsection{Topology and Propagation Geometry}

Consider a monostatic BS, a served user equipment (UE), moving targets, passive clutter, and a non-cooperative co-channel OFDM transmitter. Here, $\mathsf B$ is the ISAC BS, $\mathsf U$ is its downlink UE, $\mathsf I$ is a non-cooperative co-channel OFDM transmitter, that is, an uncoordinated source whose data symbols are unknown to the BS, $\mathcal T=\{\mathsf T_\ell\}_{\ell=1}^{L}$ contains moving targets, and $\mathcal C=\{\mathsf C_g\}_{g=1}^{G}$ contains passive clutter-generating scatterers. The links $\mathsf B\!\to\!\mathsf U$, $\mathsf I\!\to\!\mathsf B$, and $\mathsf I\!\to\!\mathsf U$ are one-way, whereas the sensing paths $\mathsf B\!\to\!\mathsf T_\ell\!\to\!\mathsf B$ and $\mathsf B\!\to\!\mathsf C_g\!\to\!\mathsf B$ are two-way. Thus, $\mathsf I$ is an active source carrying unknown symbols, whereas targets and clutter return the BS waveform. Fig.~\ref{fig:topology} illustrates the node and link types.

\begin{figure}[t]
\centering
\resizebox{\columnwidth}{!}{%
\begin{tikzpicture}[
        >={Latex[length=1.4mm]},
        font=\scriptsize,
        comm/.style={dashed,semithick,blue!55!black},
        sens/.style={dotted,thick,purple!55!black},
        sensc/.style={dotted,thin,purple!40!black},
        bsN/.style={regular polygon,regular polygon sides=3,draw=black,
                    fill=blue!60!white,minimum size=7.2pt,inner sep=0pt},
        intN/.style={star,star points=5,draw=black,fill=purple!65,
                     minimum size=9.4pt,inner sep=0pt},
        ueN/.style={circle,draw=black,fill=orange!85!black,
                    minimum size=7.0pt,inner sep=0pt},
        tgtN/.style={star,star points=6,star point ratio=2.2,draw=black,
                     fill=red!75!black,minimum size=8.8pt,inner sep=0pt},
        clN/.style={rectangle,draw=black,fill=black!14,
                    minimum size=5.2pt,inner sep=0pt},
        lbl/.style={font=\scriptsize,inner sep=0.6pt},
        tag/.style={draw=black!35,rounded corners=1.1pt,fill=white,
                    font=\scriptsize,inner sep=1.4pt,align=center}
    ]
        \draw[black!28,rounded corners=1.6pt]
              (-0.18,-1.22) rectangle (8.42,2.18);
        \fill[black!4] (-0.18,-1.22) rectangle (8.42,-0.88);

        \coordinate (B)  at (2.18,0.22);
        \coordinate (U)  at (7.58,0.22);
        \coordinate (I)  at (4.52,1.82);
        \coordinate (T1) at (6.18,1.38);
        \coordinate (T2) at (5.72,-0.68);
        \coordinate (C1) at (0.28,0.88);
        \coordinate (C2) at (0.72,0.28);
        \coordinate (C3) at (0.32,-0.38);
        \coordinate (C4) at (1.08,-0.78);
        \coordinate (C5) at (2.92,-0.86);
        \coordinate (C6) at (3.42,1.22);

        \foreach \r in {1.05,1.72}
            \draw[black!18] (B) circle (\r);
        \fill[blue!8] (B) -- ++(118:0.62) arc (118:232:0.62) -- cycle;

        \foreach \C in {C1,C2,C3,C4,C5,C6}
            \draw[sensc,<->] (B) -- (\C);

        \draw[comm,->] (B) -- node[lbl,below,pos=.58] {$H_{\mathsf{BU}},\,X$} (U);
        \draw[comm,->] (I) -- node[lbl,above left,pos=.42] {$H_{\rm I},\,D$} (B);
        \draw[comm,->] (I) -- node[lbl,above right,pos=.55] {$H_{\mathsf{IU}},\,D$} (U);
        \draw[sens,<->] (B) -- node[lbl,sloped,above,pos=.62]
              {$\alpha_1,\tau_1,\nu_1$} (T1);
        \draw[sens,<->] (B) -- node[lbl,sloped,below,pos=.58]
              {$\alpha_2,\tau_2,\nu_2$} (T2);

        \node[clN] (C1n) at (C1) {};
        \node[clN] (C2n) at (C2) {};
        \node[clN] (C3n) at (C3) {};
        \node[clN] (C4n) at (C4) {};
        \node[clN] (C5n) at (C5) {};
        \node[clN] (C6n) at (C6) {};
        \node[bsN]  (Bn)  at (B)  {};
        \node[intN] (In)  at (I)  {};
        \node[ueN]  (Un)  at (U)  {};
        \node[tgtN] (T1n) at (T1) {};
        \node[tgtN] (T2n) at (T2) {};

        \node[below left=0.4pt and 0.2pt of Bn] {$\mathsf B$};
        \node[below=1.2pt of Un] {$\mathsf U$};
        \node[above=1.0pt of In] {$\mathsf I$};
        \node[right=1.1pt of T1n] {$\mathsf T_1$};
        \node[right=1.1pt of T2n] {$\mathsf T_2$};
        \node[tag,anchor=west] at (-0.02,1.78)
              {static clutter $\mathcal C$\\[-0.4pt]
               $\gamma_g,~\nu_g^{\rm c}\!\approx\!0$};
        \node[left=0.6pt of C1n,font=\scriptsize] {$\mathsf C_g$};

        \draw[-{Latex[length=1.5mm]},very thick,red!75!black]
              (T1n.north east) -- ++(0.40,0.22)
              node[above,font=\scriptsize,inner sep=0.4pt] {$v_1$};
        \draw[-{Latex[length=1.5mm]},very thick,red!75!black]
              (T2n.south east) -- ++(0.40,-0.18)
              node[below,font=\scriptsize,inner sep=0.4pt] {$v_2$};

        \node[tag] at (3.22, 0.58) {$X$ known};
        \node[tag] at (4.52, 1.18) {$D$ unknown};
        \node[tag,anchor=east] at (8.28,1.78)
              {echo of $X$\\[-0.4pt] vs.\ residual $D/X$};

        \draw[comm] (0.08,-1.07) -- (0.52,-1.07);
        \node[anchor=west,font=\scriptsize,inner sep=0.4pt] at (0.56,-1.07)
              {one-way link};
        \draw[sens] (2.72,-1.07) -- (3.16,-1.07);
        \node[anchor=west,font=\scriptsize,inner sep=0.4pt] at (3.20,-1.07)
              {two-way echo};
        \node[anchor=east,font=\scriptsize,inner sep=0.4pt] at (8.28,-1.07)
              {monostatic CPI at $\mathsf B$};
    \end{tikzpicture}%
}
\caption{System topology: one-way communication/interference links and two-way sensing paths, with multiple static clutter scatterers and two moving targets.}
\label{fig:topology}
\end{figure}

Let $\vec{p}_a$ denote the position of node $a$. Target $\ell$ has $d_\ell=\|\vec{p}_{\mathsf T_\ell}-\vec{p}_{\mathsf B}\|$, radial velocity $v_\ell$ (positive when receding), round-trip delay, and monostatic Doppler
\begin{equation}
\tau_\ell=2d_\ell/c,\qquad
\nu_\ell=-2v_\ell/\lambda.                                  \label{eq:geometry}
\end{equation}
Target $\ell$ has two-way amplitude $\alpha_\ell$ \cite{Triwidyastuti2026arxiv}, and clutter component $g$ is described by $(\gamma_g,\tau_g^{\rm c},\nu_g^{\rm c})$, with $|\nu_g^{\rm c}|\ll1/(NT_{\rm s})$ when static. These coefficients are deterministic over a coherent processing interval (CPI).

\subsection{OFDM Resource Grid and Transmitter}

The BS transmits $K$ subcarriers over $N$ OFDM symbols per CPI. With $\Delta f$, $T=1/\Delta f$, $T_{\rm s}=T+T_{\rm cp}$, and alphabet $\mathcal A$, the resource element satisfies $X[k,n]\in\mathcal A$, $\mathbb{E}\big[|X[k,n]|^2\big]=1$. Its unitary $K$-point inverse fast Fourier transform (IFFT) is $x[m,n]=(1/\sqrt K)\sum_{k=0}^{K-1}X[k,n]e^{j2\pi km/K}$ for $0\leq m<K$. Let $T_{\rm samp}=T/K$ and $N_{\rm cp}=T_{\rm cp}/T_{\rm samp}$. Cyclic-prefix (CP) extension gives $x_{\rm cp}[m,n]=x([m]_K,n)$ for $-N_{\rm cp}\leq m<K$. The transmitted complex envelope $s_{\rm B}(t)$ serializes $x_{\rm cp}[m,n]$ at interval $T_{\rm samp}$, scaled so that $P_t$ is the average useful-symbol transmit power. The non-cooperative transmitter similarly radiates independent unit-power symbols $D[k,n]$ unknown to the BS and UE.

\subsection{UE Communication Receiver}

After radio-frequency (RF) downconversion, analog-to-digital converter (ADC) sampling, synchronization, CP removal, and a $K$-point fast Fourier transform (FFT), the UE observes $Y_{\mathsf U}[k,n]=\sqrt{P_t}H_{\mathsf BU}[k,n]X[k,n]+H_{\mathsf IU}[k,n]D[k,n]+W_{\mathsf U}[k,n]$.
On pilot set $\mathcal P$, least squares followed by interpolation $\mathcal I_{\mathcal P}\{\cdot\}$ gives $\widehat H_{\mathsf BU}=\mathcal I_{\mathcal P}\{Y_{\mathsf U}/(\sqrt{P_t}X)\}$ \cite{Thinh_TGCN_2026}. One-tap equalization, constellation detection, and demapping are
\begin{align}
\widehat X_{\mathsf U}&=\arg\min_{a\in\mathcal A}
|\widetilde X_{\mathsf U}-a|^2,\qquad
\widehat{\vec{b}}=\mu^{-1}(\widehat X_{\mathsf U}).                    \label{eq:uechain}
\end{align}
where $\widetilde X_{\mathsf U}=\frac{Y_{\mathsf U}}
{\sqrt{P_t}\widehat H_{\mathsf BU}}$, $\mu$ is the symbol mapper; indices $(k,n)$ are suppressed in (\ref{eq:uechain}). Unlike the BS, the UE does not know the entire frame before detection and cannot perform sensing-side reference removal.

This paper is sensing-centric and adopts this chain without modification to establish the transmitted reference and served-UE link. It does not propose a new waveform or UE communication receiver; its contribution begins with sensing-side reference removal at the BS.

\subsection{Sensing Receiver and Frequency-Domain Model}

After RF downconversion, the BS complex-baseband signal before sampling is $r_{\mathsf B}(t)=\sum_{\ell=1}^{L}\alpha_\ell s_{\mathsf B}(t-\tau_\ell)e^{j2\pi\nu_\ell t}+\sum_{g=1}^{G}\gamma_gs_{\mathsf B}(t-\tau_g^{\rm c})e^{j2\pi\nu_g^{\rm c}t}+(h_{\mathsf IB}*d_{\mathsf I})(t)+w_{\mathsf B}(t)$, where $d_{\mathsf I}(t)$ is the interferer complex envelope and $*$ denotes convolution. ADC sampling, synchronization, CP removal, and the FFT produce
\begin{align}
Y[k,n]={}&\sqrt{P_t}X[k,n]\big(H_{\rm t}[k,n]+H_{\rm c}[k,n]\big)
 \nonumber\\[-1mm]
&+H_{\rm I}[k,n]D[k,n]+W[k,n],                                      \label{eq:y}
\end{align}
where $W[k,n]\sim\mathcal{CN}(0,\sigma_w^2)$ is Gaussian noise and $D[k,n]$ is an independent unit-power interferer symbol. The target response is $H_{\rm t}[k,n]=\allowbreak\sum_{\ell=1}^{L}\alpha_\ell\allowbreak e^{-j2\pi k\Delta f\tau_\ell}\allowbreak e^{j2\pi\nu_\ell nT_{\rm s}}$, while the clutter response is $H_{\rm c}[k,n]=\allowbreak\sum_{g=1}^{G}\gamma_g\allowbreak e^{-j2\pi k\Delta f\tau_g^{\rm c}}\allowbreak e^{j2\pi\nu_g^{\rm c}nT_{\rm s}}$. The interference channel may be frequency and time selective, with $H_{\rm I}[k,n]=\allowbreak\sum_{r=1}^{R_{\rm I}}\varrho_r\allowbreak e^{-j2\pi k\Delta f\vartheta_r}\allowbreak e^{j2\pi\varphi_rnT_{\rm s}}$.
Here $(\vartheta_r,\varphi_r)$ are one-way interference-channel parameters, unlike the two-way target $(\tau_\ell,\nu_\ell)$ in (\ref{eq:geometry}).
We assume common OFDM numerology, timing within the CP, all delay spreads below $T_{\rm cp}$, and negligible Doppler inter-carrier interference (ICI) and range migration; the interferer power is absorbed into $H_{\rm I}$.

Because the BS knows the full transmitted frame, it removes the local reference:
\begin{align}
	\begin{aligned}
		&Z[k,n]=\frac{Y[k,n]}{\sqrt{P_t}X[k,n]} \\
		&=H_{\rm t}[k,n]+H_{\rm c}[k,n]
		+\frac{H_{\rm I}[k,n]}{\sqrt{P_t}}U[k,n]+\widetilde W[k,n]. \label{eq:z}
	\end{aligned}
\end{align}
Here $U[k,n]\triangleq\allowbreak D[k,n]/X[k,n]$, while $\widetilde W[k,n]\triangleq\allowbreak W[k,n]/(\sqrt{P_t}X[k,n])$. Equation~(\ref{eq:z}) performs reference normalization using the known transmitted symbol rather than conventional zero-forcing channel estimation. Since the transmitted sensing symbols are known at the monostatic BS, direct normalization removes the local reference exactly. Under the constant-modulus reference adopted below, $1/X[k,n]=X^*[k,n]$, so the operation reduces to phase de-rotation and introduces no symbol-dependent noise enhancement. A minimum mean-square error (MMSE) operation would require prior second-order models for the echo and disturbance components and would no longer provide the exact structural separation required by the subsequent cumulant analysis. Reference removal exactly cancels $X$ from every echo but not from the non-cooperative waveform. The resulting target or clutter matrix is a sum of outer products
\begin{equation}
\mathbf{Z}_{\rm e}=\sum_q c_q\vec{a}_f(\tau_q)\vec{a}_t^T(\nu_q),             \label{eq:rankone}
\end{equation}
where $[\vec{a}_f(\tau)]_k=e^{-j2\pi k\Delta f\tau}$ and $[\vec{a}_t(\nu)]_n=e^{j2\pi\nu nT_{\rm s}}$.

\subsection{Problem Formulation and Proposed Solutions}
\label{subsec:problem}

Let $\mathcal Z=\{Z_m[k,n]\}_{m=1}^{M}$ collect independent, phase-aligned grids over which $H_{\rm t}$, $H_{\rm c}$, and $H_{\rm I}$ are constant, while $D_m$ and $W_m$ are independent across $m$. Each cell in (\ref{eq:z}) contains a desired target term, deterministic clutter, a randomly modulated co-channel interference term, and Gaussian noise. Consistent with this sensing-centric scope, the first task at the BS is
\begin{equation}
\mathcal H_0:\mathbf{H}_{\rm I}=\mathbf{0}
\quad\text{versus}\quad
\mathcal H_1:\mathbf{H}_{\rm I}\neq\mathbf{0}.                              \label{eq:problem_hyp}
\end{equation}
The cellwise powers are $p_{\rm I}[k,n]=|H_{\rm I}[k,n]|^2/P_t$ and $p_w[k,n]=\mathbb{E}\big[|\widetilde W[k,n]|^2\big]$, while target detection is enhanced subject to $\varepsilon(w)\geq\varepsilon_0$, where $\varepsilon(w)$ denotes the effective aperture efficiency and $\varepsilon_0$ is its prescribed floor. Thus, the required outputs are an interference-presence decision, cellwise interference and noise profiles, and a range-Doppler detector that does not sacrifice excessive aperture efficiency.

The proposed solution begins after reference removal, as emphasized in Fig.~\ref{fig:problem_chain}, and treats the two disturbances separately: moving-target indication (MTI) rejects the deterministic zero-Doppler clutter, whereas centered diagonal fourth-order statistics identify the non-Gaussian co-channel component. Bias correction and a second-order moment then separate interference from noise, and diagonally loaded disturbance weights feed the two-dimensional discrete Fourier transform (DFT) range-Doppler map and a constant false-alarm rate (CFAR) detector. Sections~\ref{sec:proposed} and~\ref{sec:weighting} develop these steps.

\begin{figure*}[t]
\centering
\resizebox{0.92\textwidth}{!}{%
\begin{tikzpicture}[
  >=Latex,
  font=\small,
  standard/.style={draw=black!55, rounded corners=2pt, align=center,
    minimum height=8mm, minimum width=17mm, fill=black!3, inner sep=2pt},
  proposed/.style={draw=blue!65!black, thick, rounded corners=2pt,
    align=center, minimum height=8mm, fill=blue!5, inner sep=2pt},
  edge text/.style={font=\scriptsize, align=center, fill=white, inner sep=1pt},
  flow/.style={-{Latex[length=2.2mm]}, thick},
  propflow/.style={-{Latex[length=2.2mm]}, thick, draw=blue!65!black}
]
  \node[standard, minimum width=5mm] (echo) at (-0.5,0) {$r_{\mathsf B}(t)$};
  \node[standard, minimum width=5mm] (y) at (3.3,0) {$Y[k,n]$};
  \node[proposed, minimum width=5mm] (z) at (6.4,0) {$Z[k,n]$};
  \node[proposed, minimum width=40mm] (est) at (11.5,0)
    {$\{Z_{\rm mti}[k,n],\,\widehat p_{\rm I}[k,n],\,\widehat p_w[k,n]\}$};
  \node[proposed, minimum width=5mm] (rd) at (16,0) {$|S_w[q,p]|^2$};
  \node[proposed, minimum width=5mm] (dec) at (18.2,0)
    {$\widehat{\mathcal T}$};

  \draw[flow] (echo) -- node[edge text, above]
    {ADC, CP removal, FFT} (y);
  \draw[propflow] (y) -- node[edge text, above]
    {$/(\sqrt{P_t}X[k,n])$} (z);
  \draw[propflow] (z) -- node[edge text, above]
    {MTI; centered $\widehat\psi,\widehat\kappa_{\rm c}$} (est);
  \draw[propflow] (est) -- node[edge text, above]
    {$w_\delta$; 2-D DFT} (rd);
  \draw[propflow] (rd) -- node[edge text, above]
    {CFAR} (dec);
\end{tikzpicture}%
}
\caption{BS sensing chain; proposed processing is shown in blue.}
\label{fig:problem_chain}
\end{figure*}

\subsection{Structural Distinction}

In (\ref{eq:rankone}) the target matrix is a sum of $L$ outer products, and static clutter collapses to a single one because $\vec{a}_t(0)=\vec{1}_N$, giving rank one irrespective of $G$; this is exact only at $\nu_g^{\rm c}=0$. The residual $\mathbf{H}_{\rm I}\odot\mathbf{U}$ inherits no such structure, since the independently modulated $\mathbf{U}$ destroys the outer-product form and leaves a generically full-rank matrix. For the finite-alphabet quadrature amplitude modulation (QAM) or phase-shift keying (PSK) draws used here, rank $\min(K,N)$ is verified numerically in Section~\ref{sec:results}. Known-waveform echoes are thus deterministic and cumulant-free while the interference residual is random and non-Gaussian, so MTI and fourth-order processing address them separately. We assume a constant-modulus reference $|X[k,n]|=1$ (QPSK in simulations); Remark~\ref{rem:cm} shows why a nonconstant modulus breaks fourth-order identification.

\section{Proposed Dual-Branch Processing}
\label{sec:proposed}

\subsection{MTI Clutter Rejection}

Static clutter is removed by slow-time mean cancellation,
\begin{equation}
Z_{\rm mti}[k,n]=Z[k,n]-\frac{1}{N}\sum_{n'=0}^{N-1}Z[k,n'].          \label{eq:mti}
\end{equation}

\begin{lemma}[MTI response]
\label{lem:mti}
For a slow-time tone $\vec{a}_t(\nu)$, the normalized energy retained by (\ref{eq:mti}) is
\begin{equation}
g_{\rm mti}^2(\nu)=1-\left|
\frac{\sin(\pi\nu NT_{\rm s})}{N\sin(\pi\nu T_{\rm s})}
\right|^2,                                                           \label{eq:mti_gain}
\end{equation}
read by continuity at $\nu T_{\rm s}\in\mathbb{Z}$, where $g_{\rm mti}=0$. Near zero,
\begin{equation}
g_{\rm mti}^2(\nu)=\frac{N^2-1}{3}(\pi\nu T_{\rm s})^2+O\!\left((\nu T_{\rm s})^4\right). \label{eq:mti_taylor}
\end{equation}
\end{lemma}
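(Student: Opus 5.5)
The plan is to read (\ref{eq:mti}) as the orthogonal projector $\mathbf{P}=\mathbf{I}_N-\frac{1}{N}\vec{1}_N\vec{1}_N^T$ acting along slow time. Applied to the tone $\vec{a}_t(\nu)$, it gives $\mathbf{P}\vec{a}_t(\nu)=\vec{a}_t(\nu)-\bar a\,\vec{1}_N$, where $\bar a=\frac1N\sum_{n=0}^{N-1}e^{j2\pi\nu nT_{\rm s}}$. Since $\mathbf{P}$ is idempotent and Hermitian, the retained energy is
\begin{equation*}
\|\mathbf{P}\vec{a}_t\|^2=\vec{a}_t^H\mathbf{P}\vec{a}_t=\|\vec{a}_t\|^2-\tfrac1N|\vec{1}_N^T\vec{a}_t|^2=N-N|\bar a|^2 .
\end{equation*}
We normalize by $\|\vec{a}_t\|^2=N$, which yields $g_{\rm mti}^2=1-|\bar a|^2$.

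Next, sum the geometric series. With $x=\pi\nu T_{\rm s}$ we have $\sum_n e^{j2xn}=e^{jx(N-1)}\sin(Nx)/\sin x$, so $|\bar a|=|\sin(Nx)/(N\sin x)|$, which is exactly (\ref{eq:mti_gain}). At $\nu T_{\rm s}\in\mathbb{Z}$ the tone is identically $1$, so $\bar a$ has unit modulus and $g_{\rm mti}=0$. The Dirichlet ratio tends to modulus $1$ there, which justifies the continuity reading.

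For (\ref{eq:mti_taylor}), we avoid differentiating the Dirichlet kernel directly. Instead, we note that $N g_{\rm mti}^2=\sum_n|e^{j2xn}-\bar a|^2$ is the sample variance of the phasors $e^{j2xn}$. Expanding $e^{j2xn}=1+j2xn-2x^2n^2+O(x^3)$ gives $\bar a=1+j2x\bar n-2x^2\overline{n^2}+O(x^3)$. It follows that
\begin{equation*}
|\bar a|^2=1-4x^2(\overline{n^2}-\bar n^2)+O(x^4),
\end{equation*}
where the $x^3$ term vanishes because $|\bar a|^2$ is even in $x$. The empirical variance of $n\in\{0,\dots,N-1\}$ is $(N^2-1)/12$, so $g_{\rm mti}^2=\frac{N^2-1}{3}x^2+O(x^4)$. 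As a cross-check, one can use $\sin(Nx)/(N\sin x)=1-\frac{(N^2-1)x^2}{6}+O(x^4)$ and square it.

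The only delicate step is the bookkeeping that the remainder is $O(x^4)$ rather than $O(x^3)$. This follows from evenness of $|\bar a|^2$ in $\nu$: replacing $\nu$ by $-\nu$ conjugates $\bar a$. With that in hand, the claimed order holds uniformly near zero.
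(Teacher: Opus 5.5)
Your proof is correct and follows the paper's route: you write mean removal as the projector $\mathbf{I}_N-\vec{1}_N\vec{1}_N^T/N$, normalize by $\|\vec{a}_t(\nu)\|^2=N$, evaluate the Dirichlet geometric sum, and expand to second order. Your sample-variance reading of the $x^2$ coefficient, $4\cdot(N^2-1)/12$, is a tidy way to do the Taylor step the paper leaves implicit, and it agrees with your direct Dirichlet cross-check; strictly, it is $g_{\rm mti}^2$ itself, not $Ng_{\rm mti}^2$, that equals the $1/N$-normalized sample variance of the phasors.
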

\begin{proof}
Mean removal is the orthogonal projector $\boldsymbol{\Pi}_\perp=\mathbf{I}_N-\vec{1}_N\vec{1}_N^T/N$. Since $\|\vec{a}_t(\nu)\|^2=N$, $\|\boldsymbol{\Pi}_\perp\vec{a}_t(\nu)\|^2/N=1-|\vec{1}_N^T\vec{a}_t(\nu)|^2/N^2$.
Evaluating the geometric sum gives (\ref{eq:mti_gain}); its Taylor expansion gives (\ref{eq:mti_taylor}).
\end{proof}
Thus, static clutter is nulled exactly; near-zero-Doppler clutter and targets are attenuated only for $|v|\ll\lambda/(2NT_{\rm s})$; at any nonzero Doppler bin $\nu=p/(NT_{\rm s})$, $1\leq p\leq N-1$, the Dirichlet numerator nulls and $g_{\rm mti}=1$, so the loss vanishes there. Higher-order processing does not replace MTI because deterministic clutter has zero fourth-order cumulant.

The unweighted range-Doppler response $S_0[q,p]=\allowbreak\sum_{k=0}^{K-1}\sum_{n=0}^{N-1}\allowbreak Z_{\rm mti}[k,n]\allowbreak e^{j2\pi kq/K}\allowbreak e^{-j2\pi np/N}$ completes the baseline branch of Fig.~\ref{fig:problem_chain}; maps are displayed in dB relative to the peak. Its axes are $d_q=cq/(2K\Delta f)$ and $v_p=-\lambda\widetilde p/(2NT_{\rm s})$ \cite{Richards2014}, where $\widetilde p=p$ for $p<\lceil N/2\rceil$ and $\widetilde p=p-N$ otherwise. Section~\ref{sec:weighting} replaces the uniform aperture in $S_0$ by interference-aware weights.

\subsection{Fourth-Order Interference Identification}

Let $z_m[k,n]$, $m=1,\ldots,M$, denote the phase-aligned repeated frames of Section~\ref{subsec:problem}. For complex $z$, let $\operatorname{cum}_4(z)\triangleq\operatorname{cum}(z,z^*,z,z^*)$ denote the fourth-order cumulant with two conjugated arguments. It is translation invariant and, for zero-mean $z$, evaluates to
\begin{equation}
\operatorname{cum}_4(z)=\mathbb{E}\big[|z|^4\big]-2\Big(\mathbb{E}\big[|z|^2\big]\Big)^2-\big|\mathbb{E}\big[z^2\big]\big|^2.  \label{eq:cumdef}
\end{equation}
The last term vanishes in population for QPSK and square QAM but is retained because its finite-sample expectation is nonzero: the additive constant in Proposition~\ref{prop:bias} is $4/M$ with it and $2/M$ without.

\begin{lemma}[Fourth-order selectivity]
\label{lem:cumulant}
Assume a constant-modulus reference $|X[k,n]|=1$, deterministic echoes over the estimation window, proper independent $D[k,n]$, and proper Gaussian $W[k,n]$. At every resource element,
\begin{equation}
\kappa[k,n]=\operatorname{cum}_4(Z[k,n])
=\frac{\kappa_U|H_{\rm I}[k,n]|^4}{P_t^2},                           \label{eq:diag}
\end{equation}
where $\kappa_U=\operatorname{cum}_4(U)$, equal to $-1$ for proper PSK and $-0.68$ for normalized 16-QAM. The value is invariant to target and clutter amplitudes and to Gaussian-noise power.
\end{lemma}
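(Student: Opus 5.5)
The plan is to decompose $Z[k,n]$ from (\ref{eq:z}) into a deterministic part and two independent random parts, and then apply the standard properties of the joint cumulant $\operatorname{cum}(z,z^*,z,z^*)$. At a fixed cell, write $Z=\mu+A U+\widetilde W$. Here $\mu=H_{\rm t}+H_{\rm c}$ is deterministic over the window and $A=H_{\rm I}/\sqrt{P_t}$ is a fixed complex scalar.

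First, translation invariance, stated just before (\ref{eq:cumdef}), removes $\mu$. This is the multilinearity argument: every joint cumulant of order at least two involving a constant vanishes. This step gives invariance to the target and clutter amplitudes.

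Second, I show that $U$ and $\widetilde W$ are independent. Since $D$ and $W$ are independent and $X$ is a known, deterministic reference at the cell, $U=D/X$ and $\widetilde W=W/(\sqrt{P_t}X)$ are independent as well. Additivity of cumulants over independent summands then gives $\operatorname{cum}_4(Z)=\operatorname{cum}_4(AU)+\operatorname{cum}_4(\widetilde W)$.

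Third, I show that the noise term vanishes. Because $|X|=1$, we have $\widetilde W=W X^*/\sqrt{P_t}$, which is a phase rotation and scaling of a proper complex Gaussian. It is therefore still proper Gaussian, and all cumulants of order above two vanish. This step gives invariance to the noise power. I will also note why the constant-modulus assumption matters here: if $X$ were random with nonconstant modulus, $\widetilde W$ would be a Gaussian scale mixture with nonzero kurtosis. That observation is the content of Remark~\ref{rem:cm}.

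Fourth, I handle the interference term. Multilinearity with two unconjugated and two conjugated slots gives $\operatorname{cum}_4(AU)=A A^* A A^*\operatorname{cum}_4(U)=|A|^4\kappa_U=|H_{\rm I}|^4\kappa_U/P_t^2$, which is (\ref{eq:diag}).

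The remaining work is evaluating $\kappa_U$, and this is the step that needs the most care. The issue is whether $U=D/X=DX^*$ keeps the statistics of $D$. Since $|X|=1$, $|U|=|D|$, so $\mathbb{E}|U|^4=\mathbb{E}|D|^4$ and $\mathbb{E}|U|^2=1$.

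For the pseudo-moment I will argue that $\mathbb{E}[U^2]=\mathbb{E}[D^2]\,(X^*)^2=0$ by properness of $D$, so $U$ is proper and zero-mean. If $X$ is treated as random, the same holds because it is independent of $D$. Then (\ref{eq:cumdef}) reduces to $\kappa_U=\mathbb{E}|D|^4-2$.

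For proper PSK, $|D|=1$, so $\kappa_U=1-2=-1$. For unit-power 16-QAM, the levels $\{\pm1,\pm3\}/\sqrt{10}$ give $\mathbb{E}|D|^4=1.32$, so $\kappa_U=-0.68$.

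The only subtle point is that the cumulant is taken at the cell with $X$ fixed, with randomness coming only from $D$ and $W$. I will state this conditioning explicitly so that the deterministic-echo claim and the phase-rotation argument are consistent.
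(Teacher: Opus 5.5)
Your proposal is essentially the paper's proof: the same split $Z=c+aU+\widetilde W$, translation invariance to discard the echoes, additivity over independent summands, vanishing Gaussian cumulants, and the factor $aa^*aa^*=|a|^4$ from the two conjugated slots, with your explicit computation $\kappa_U=\mathbb{E}\big[|D|^4\big]-2$ supplying the PSK and 16-QAM constants that the paper only quotes. The one divergence is that you obtain independence of $U$ and $\widetilde W$ by conditioning on $X$, whereas the paper keeps $X$ random (as Remark~\ref{rem:cm}, with its $\mathbb{E}\big[|X|^{-4}\big]$, presupposes) and uses circular symmetry of $W$ to make $\widetilde W=WX^*/\sqrt{P_t}$ independent of $U=DX^*$ despite the shared $X^*$; your conditional argument is valid, but once $X$ is fixed, $\widetilde W$ is Gaussian and independent of $U$ for any modulus, so in your setting $|X|=1$ only serves to make $\kappa_U=\operatorname{cum}_4(D)$, and your scale-mixture side remark belongs to the paper's unconditional setting rather than to your own.
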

\begin{proof}
At a fixed cell, write $Z=c+aU+\widetilde W$, where $c=H_{\rm t}+H_{\rm c}$ and $a=H_{\rm I}/\sqrt{P_t}$. Because $|X|=1$, $U=D/X=DX^{*}$ is a phase rotation of $D$, so $\operatorname{cum}_4(U)=\operatorname{cum}_4(D)$, and $\widetilde W=WX^{*}/\sqrt{P_t}$ is proper Gaussian with unchanged variance; circular symmetry of $W$ makes $\widetilde W$ independent of $U$ even though both carry $X^{*}$. Cumulants of order at least two are translation invariant, cumulants of independent sums add, and Gaussian cumulants above order two vanish \cite{MendelProcIEEE1991}. Because two of the four arguments are conjugated, homogeneity scales the statistic by $a\,a^*a\,a^*=|a|^4$ rather than $a^4$, giving $\operatorname{cum}_4(Z)=|a|^4\operatorname{cum}_4(U)$, which is (\ref{eq:diag}) and is real for complex $a$.
\end{proof}

\begin{remark}[Necessity of a constant-modulus reference]
\label{rem:cm}
If $|X|$ is not constant, $U=D/X$ and $\widetilde W=W/(\sqrt{P_t}X)$ share $X$ and are dependent, so cumulants no longer add. With $\eta\triangleq\mathbb{E}\big[|X|^{-2}\big]$ and $p_w=\sigma_w^2/P_t$ the pre-division noise power,
$\operatorname{cum}_4(Z)=\big[p_{\rm I}^{2}\mathbb{E}\big[|D|^{4}\big]+4p_{\rm I}p_w+2p_w^{2}\big]\mathbb{E}\big[|X|^{-4}\big]-2\big(p_{\rm I}+p_w\big)^{2}\eta^{2}$,
which reduces to (\ref{eq:diag}) only when $\mathbb{E}\big[|X|^{-4}\big]=\eta^2=1$. Otherwise cross terms proportional to $\mathbb{E}[|X|^{-4}]-\eta^2$ survive, (\ref{eq:diag}) is no longer invertible for $p_{\rm I}$ alone, and the polarity can even invert: a 16-QAM reference with a 16-QAM interferer gives $\operatorname{cum}_4(D/X)>0$.
\end{remark}

The statistic is formed on $Z$ before MTI to avoid scaling and mixing the interference profile. Under symbol independence, cross-cell fourth-order cumulants vanish; hence only the $KN$ diagonal statistics in (\ref{eq:diag}) are required. This reduces storage from $O((KN)^4)$ for a full fourth-order tensor to $O(KN)$, with $O(MKN)$ arithmetic for the proposed cellwise estimator.

\subsubsection{Estimator and Finite-Sample Correction}

Define the centered sample as $\dot z_m=z_m-\bar z$, where $\bar z=M^{-1}\sum_m z_m$. The associated moments are $\widehat m_2=M^{-1}\sum_m|\dot z_m|^2$ and $\widehat m_{20}=M^{-1}\sum_m\dot z_m^2$.
The natural estimator is
\begin{equation}
\widehat\kappa=M^{-1}\sum_m|\dot z_m|^4-2\widehat m_2^2-|\widehat m_{20}|^2. \label{eq:khat}
\end{equation}
Centering removes the deterministic target and clutter but creates a finite-sample cumulant bias.

\begin{proposition}[Finite-sample bias of the centered cumulant]
\label{prop:bias}
For $M\geq4$, let $z_m=c+e_m$ with deterministic $c$ and independent zero-mean proper $e_m$ satisfying $\mathbb{E}\big[|e_m|^2\big]=\psi$ and $\operatorname{cum}_4(e_m)=\kappa$. Then
\begin{align}
\mathbb{E}\big[\widehat\kappa\big]&=-\frac{4\psi^2(M-1)}{M^2}+A_M\kappa,      \nonumber\\
\mathbb{E}\big[\widehat m_2^2\big]&=\frac{\psi^2(M-1)}{M}+\frac{(M-1)^2}{M^3}\kappa, \label{eq:bias}
\end{align}
with $A_M=\big[(M-1)^4+(M-1)\big]M^{-4}-3(M-1)^2M^{-3}$. Centering therefore biases $\widehat\kappa$ in two distinct ways: an additive term that is present even when $\kappa=0$, and a multiplicative shrinkage of $\kappa$ itself. The additive term is removed by $\widehat\kappa+4M^{-1}\widehat m_2^2$, which has exactly zero mean under the Gaussian null $\kappa=0$ but satisfies $\mathbb{E}\big[\widehat\kappa+4M^{-1}\widehat m_2^2\big]=B_M\kappa$ in general, where
\begin{equation}
B_M=\frac{(M-1)(M-2)(M^2-4M+2)}{M^4}.                                 \label{eq:BM}
\end{equation}
The fully corrected estimator
\begin{equation}
\widehat\kappa_{\rm c}=\frac{1}{B_M}
\left(\widehat\kappa+\frac{4}{M}\widehat m_2^2\right)                 \label{eq:corrected}
\end{equation}
is thus unbiased for every proper disturbance of finite fourth moment, Gaussian or not. Propriety is essential rather than cosmetic: improper laws contribute additional $\mathbb{E}\big[e^2\big]$ pairings that alter both $A_M$ and $B_M$.
\end{proposition}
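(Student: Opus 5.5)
The plan is to reduce everything to moment bookkeeping on the centered residuals, using only propriety and independence. Centering removes $c$ exactly, since $\dot z_m=z_m-\bar z=e_m-\bar e$. So $\widehat\kappa$, $\widehat m_2$ and $\widehat m_{20}$ do not depend on $c$, and we may set $c=0$.

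Write $\dot e_m=\sum_{j}a_{mj}e_j$ with $a_{mj}=\delta_{mj}-1/M$. Each of the three terms in (\ref{eq:khat}) is then a quartic form in $(e_j,e_j^*)$ with real coefficients built from $a$:
\begin{itemize}
\item $\sum_m|\dot e_m|^4$,
\item $\widehat m_2^2=M^{-2}\sum_{m,n}|\dot e_m|^2|\dot e_n|^2$,
\item $|\widehat m_{20}|^2=M^{-2}\sum_{m,n}\dot e_m^2\dot e_n^{*2}$.
\end{itemize}

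Next I classify which index patterns $(j_1,j_2,j_3,j_4)$ have nonzero expectation in $\mathbb{E}[e_{j_1}e^*_{j_2}e_{j_3}e^*_{j_4}]$.
\begin{itemize}
\item Any index appearing exactly once kills the term, because $e_j$ is zero mean and independent of the others.
\item If all four indices are equal, the term is $\mathbb{E}|e|^4=\kappa+2\psi^2$. This uses (\ref{eq:cumdef}) together with $\mathbb{E}[e^2]=0$.
\item If the indices form two distinct pairs, the term is $\psi^2$ when each pair is conjugate-balanced. It is $|\mathbb{E}[e^2]|^2=0$ when the pairing is $\{e_ie_i\},\{e_j^*e_j^*\}$.
\end{itemize}
Thus propriety, meaning only $\mathbb{E}[e^2]=0$, is exactly what discards the unbalanced pairings. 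This is also where an improper law would alter $A_M$ and $B_M$, as the statement notes.

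Each expectation therefore becomes
\[
\mu_4\,S_4+\psi^2\,(\text{pair sums}),\qquad \mu_4=\kappa+2\psi^2,
\]
where $S_4=\sum_j a_{mj}^2a_{nj}^2$ (or the analogous power) and the pair sums are products of second-order sums such as $\sum_j a_{mj}^2$ and $\sum_j a_{mj}a_{nj}$. These sums are elementary:
\[
\sum_j a_{mj}^2=\frac{M-1}{M},\qquad \sum_j a_{mj}a_{nj}=-\frac{1}{M}\quad(m\neq n),
\]
\[
\sum_j a_{mj}^4=\frac{(M-1)^4+(M-1)}{M^4},\qquad \sum_j a_{mj}^2a_{nj}^2=\frac{2(M-1)^2+(M-2)}{M^4}\quad(m\neq n).
\]

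I will then assemble the three expectations, separating the $m=n$ and $m\neq n$ contributions.
\begin{itemize}
\item For $\mathbb{E}[\widehat m_2^2]$, I expect $\psi^2(M-1)/M+(M-1)^2\kappa/M^3$, which confirms the second line of (\ref{eq:bias}).
\item $\mathbb{E}[|\widehat m_{20}|^2]$ contains only the pairing $\{\dot e_m\dot e_n^*\},\{\dot e_m\dot e_n^*\}$ plus the all-equal term.
\item Combining the pieces as in (\ref{eq:khat}), the $\psi^2$ coefficients should collapse to $-4(M-1)/M^2$ and the $\kappa$ coefficient to $A_M$.
\end{itemize}

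For the corrections, adding $4M^{-1}\mathbb{E}[\widehat m_2^2]$ cancels the $\psi^2$ terms exactly, since $-4\psi^2(M-1)/M^2+4\psi^2(M-1)/M^2=0$. This gives zero mean under $\kappa=0$. The surviving coefficient is $A_M+4(M-1)^2/M^4$, and I will verify by polynomial algebra that it factors as $B_M$ in (\ref{eq:BM}). Unbiasedness of (\ref{eq:corrected}) then follows by linearity, provided $B_M\neq 0$. Because $B_M$ has roots $1$, $2$ and $2\pm\sqrt2$, it is positive for all integers $M\geq4$, which explains the hypothesis $M\geq4$. Finite fourth moments make all expectations well defined.

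The main obstacle is the careful bookkeeping in the $m\neq n$ cross terms of $\widehat m_2^2$ and $|\widehat m_{20}|^2$. Each of the three pairings of four indices must be weighted by the correct product of $a$-sums, with the non-conjugate-balanced pairing dropped. I would double-check these counts against the known real Gaussian case and by a symbolic check at small $M$, for example $M=4$.
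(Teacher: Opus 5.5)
Your proposal is correct and follows essentially the paper's route. You write the centered residuals as real linear combinations of the independent proper $e_j$, split each fourth moment into the diagonal $\kappa+2\psi^2$ term plus balanced $\psi^2$ pairings (propriety drops the $\mathbb{E}[e^2]$ pairing), and reduce the coefficients to $A_M$ and $B_M$. The paper differs only in organization: it uses $\boldsymbol{\Pi}^T\boldsymbol{\Pi}=\boldsymbol{\Pi}$ to write $M\widehat m_2=\vec{e}^H\boldsymbol{\Pi}\vec{e}$ and $M\widehat m_{20}=\vec{e}^T\boldsymbol{\Pi}\vec{e}$, whose second moments need only $\operatorname{tr}\boldsymbol{\Pi}=\operatorname{tr}\boldsymbol{\Pi}^2=M-1$ and $\sum_i\Pi_{ii}^2=(M-1)^2/M$, which removes the $m\neq n$ cross-term bookkeeping and the sum $\sum_j a_{mj}^2a_{nj}^2$ that you flagged as the main obstacle.
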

\begin{proof}
Let $\boldsymbol{\Pi}=\mathbf{I}_M-\vec{1}_M\vec{1}_M^T/M$ and $\dot{\vec{z}}=\boldsymbol{\Pi}\vec{e}$. For a linear form $y=\sum_jc_je_j$ of independent proper variables, $\mathbb{E}\big[|y|^2\big]=\psi\sum_j|c_j|^2$, $\operatorname{cum}_4(y)=\kappa\sum_j|c_j|^4$ and $\mathbb{E}\big[|y|^4\big]=\operatorname{cum}_4(y)+2\Big(\mathbb{E}\big[|y|^2\big]\Big)^2$. Applying this to the rows of $\boldsymbol{\Pi}$, for which $\sum_j\Pi_{mj}^2=1-M^{-1}$ and $\sum_j\Pi_{mj}^4=[(M-1)^4+(M-1)]M^{-4}$, gives the first moment. For the quadratic forms, independence and propriety yield $\mathbb{E}\big[|\vec{e}^H\boldsymbol{\Pi}\vec{e}|^2\big]=\psi^2\big[(\operatorname{tr}\boldsymbol{\Pi})^2+\operatorname{tr}\boldsymbol{\Pi}^2\big]+\kappa\sum_i\Pi_{ii}^2$ and $\mathbb{E}\big[|\vec{e}^T\boldsymbol{\Pi}\vec{e}|^2\big]=2\psi^2\operatorname{tr}\boldsymbol{\Pi}^2+\kappa\sum_i\Pi_{ii}^2$,
with $\operatorname{tr}\boldsymbol{\Pi}=\operatorname{tr}\boldsymbol{\Pi}^2=M-1$ and $\sum_i\Pi_{ii}^2=(M-1)^2/M$. Substituting in (\ref{eq:khat}) gives (\ref{eq:bias}); collecting the $\kappa$ terms of $\widehat\kappa+4M^{-1}\widehat m_2^2$ and factoring yields (\ref{eq:BM}).
\end{proof}

Both corrections matter, and in opposite directions. Because standard QAM/PSK has $\kappa_U<0$, the additive bias is negative and mimics interference, whereas the shrinkage $A_M<1$ understates it; at $M=96$, $A_M=0.93$ and the two errors largely cancel near the reference SIR, so an estimator that applies neither correction can appear accurate for the wrong reason. Applying (\ref{eq:corrected}) removes both. Fourth-order error still converges as $O(M^{-1/2})$, so coherence time determines usable snapshot support.

\subsubsection{Power Decomposition and Presence Test}

For constant-modulus $X$ and unit-power $U$, the centered second moment is
\begin{equation}
\psi[k,n]=\frac{|H_{\rm I}[k,n]|^2}{P_t}+\frac{\sigma_w^2}{P_t}.      \label{eq:psi}
\end{equation}
Its unbiased estimate is $\widehat\psi=M\widehat m_2/(M-1)$.
Pairing (\ref{eq:diag}) and (\ref{eq:psi}) gives
\begin{align}
\widehat p_{\rm I}[k,n]&=\left[\frac{\widehat\kappa_{\rm c}[k,n]}{\kappa_U}\right]_+^{1/2}, \label{eq:pI}\\
\widehat p_w[k,n]&=\left[\widehat\psi[k,n]-\widehat p_{\rm I}[k,n]\right]_+,               \label{eq:pw}
\end{align}
where $[x]_+=\max(x,0)$ and the powers are normalized by $P_t$. The two clips are applied sequentially, so $\widehat p_{\rm I}+\widehat p_w=\widehat\psi$ holds whenever $\widehat p_{\rm I}\leq\widehat\psi$, which covers every operating point reported here. Second order supplies only their sum $\psi$; fourth order supplies the split. This distinction permits (\ref{eq:problem_hyp}) to be tested without knowing $\sigma_w^2$: the grid-averaged corrected cumulant has zero mean under $\mathcal H_0$ and a strictly negative mean under $\mathcal H_1$ for QAM/PSK, and normalizing it by $\widehat\psi^2$ makes its null law free of the disturbance scale. Calibration on interference-free frames then sets a one-sided threshold at the required false-alarm probability.

Proposition~\ref{prop:bias} makes $\widehat\kappa_{\rm c}$ unbiased, but (\ref{eq:pI})-(\ref{eq:pw}) apply a square root and a clip, so under $\mathcal H_0$, $\widehat p_{\rm I}$ has an $O(M^{-1/4})$ positive bias. With $\psi=1$ we measure $\mathbb{E}\big[\widehat p_{\rm I}\big]=0.23$ at $M=96$ and $0.16$ at $M=384$, with a matching negative bias in $\widehat p_w$. The presence test is therefore stated on the grid-averaged $\widehat\kappa_{\rm c}$, where the nonlinearity is absent and Proposition~\ref{prop:bias} applies exactly, rather than on the per-cell $\widehat p_{\rm I}$.

Fig.~\ref{fig:fourth} confirms that the recovered profile tracks the selective interference channel, that the second-order moment follows interference plus noise, and that the error decays as $O(M^{-1/2})$; echo invariance is checked in Section~\ref{sec:results}. The construction assumes proper disturbances, a constant-modulus reference with $\kappa_U=\operatorname{cum}_4(D)$ known for (\ref{eq:pI}) and only its sign needed for the presence test (Remark~\ref{rem:cm}), and $H_{\rm t}$, $H_{\rm c}$, and $H_{\rm I}$ constant over the $M$ snapshots.

\begin{figure*}[t]
\centering
\includegraphics[width=\textwidth]{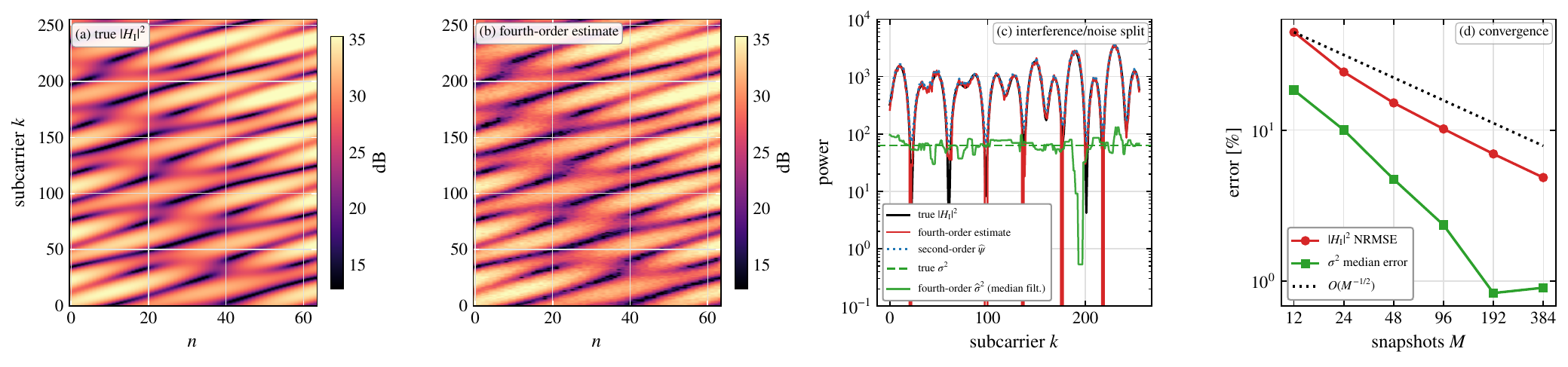}
\caption{Fourth-order validation for 16-QAM interference: (a) true and (b) estimated interference power at $M=96$; (c) interference/noise separation; (d) interference normalized root-mean-square error (NRMSE) and noise median error versus $M$.}
\label{fig:fourth}
\end{figure*}

\section{Interference-Aware Detection}
\label{sec:weighting}

\subsection{Weighted Range-Doppler Formation}

The weighted range-Doppler response extends $S_0$ as $S_w[q,p]=\allowbreak\sum_{k,n}\allowbreak w[k,n]Z_{\rm mti}[k,n]\allowbreak e^{j2\pi kq/K}\allowbreak e^{-j2\pi np/N}$, where $w[k,n]=1$ recovers $S_0$.

\begin{proposition}[Inverse-power weighting gain]
\label{prop:gain}
Approximate the post-MTI disturbance covariance by $\operatorname{diag}(\psi_1,\ldots,\psi_P)$, with $\psi_i>0$ and $P=KN$. For a matched on-grid target, $w_i\propto1/\psi_i$ maximizes the coherent output signal-to-noise ratio (SNR), and its gain over uniform weighting is
\begin{equation}
G_{\rm w}=\frac{1}{P^2}\left(\sum_{i=1}^{P}\psi_i\right)
\left(\sum_{i=1}^{P}\frac{1}{\psi_i}\right)\geq1,                   \label{eq:gain}
\end{equation}
with equality only for constant $\psi_i$.
\end{proposition}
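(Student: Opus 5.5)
We model the weighted cell sum for a matched on-grid target as a linear combiner $\vec{w}^H\vec{z}$. Here the steering vector $\vec{s}$ has unit-modulus entries $|s_i|=1$: after reference removal an on-grid target is an outer product of $\vec{a}_f$ and $\vec{a}_t$ by (\ref{eq:rankone}), and the DFT phases exactly compensate it. Under the diagonal post-MTI covariance approximation $\mathbf{R}=\operatorname{diag}(\psi_1,\ldots,\psi_P)$, the coherent output SNR is
\begin{equation}
\mathrm{SNR}(\vec{w})=\frac{|\alpha|^2\,\big|\sum_i w_i\big|^2}{\sum_i |w_i|^2\psi_i},
\end{equation}
where real nonnegative weights $w[k,n]$ are applied after phase compensation and $\alpha$ is the target amplitude. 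A general complex $\vec{w}$ can be absorbed as $w_i s_i^*$, which gives the same Rayleigh-quotient form $|\vec{w}^H\vec{s}|^2/(\vec{w}^H\mathbf{R}\vec{w})$.

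Step one is the maximization. Write $w_i\sqrt{\psi_i}\cdot\psi_i^{-1/2}$ and apply Cauchy--Schwarz:
\begin{equation}
\Big|\sum_i w_i\Big|^2\leq\Big(\sum_i|w_i|^2\psi_i\Big)\Big(\sum_i\psi_i^{-1}\Big).
\end{equation}
Hence $\mathrm{SNR}(\vec{w})\leq|\alpha|^2\sum_i\psi_i^{-1}$, with equality if and only if $w_i\sqrt{\psi_i}\propto\psi_i^{-1/2}$, that is, $w_i\propto1/\psi_i$. This is the standard matched-filter (MVDR/whitening) result $\vec{w}\propto\mathbf{R}^{-1}\vec{s}$ specialized to a diagonal $\mathbf{R}$.

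Step two evaluates the uniform weighting $w_i=1$, which gives $\mathrm{SNR}_0=|\alpha|^2P^2/\sum_i\psi_i$. The ratio of the optimum to $\mathrm{SNR}_0$ is then exactly $G_{\rm w}$ in (\ref{eq:gain}).

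Step three proves $G_{\rm w}\geq1$. Apply Cauchy--Schwarz again, to the vectors $(\sqrt{\psi_i})$ and $(1/\sqrt{\psi_i})$:
\begin{equation}
P^2=\Big(\sum_i\sqrt{\psi_i}\cdot\psi_i^{-1/2}\Big)^2\leq\Big(\sum_i\psi_i\Big)\Big(\sum_i\psi_i^{-1}\Big).
\end{equation}
This is equivalently the AM--HM inequality. Equality holds if and only if $\sqrt{\psi_i}\propto1/\sqrt{\psi_i}$, i.e.\ $\psi_i$ is constant. This gives the equality case claimed in the statement.

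The main obstacle is not the algebra but justifying the SNR model itself. Two points need care. First, the post-MTI disturbance is treated as uncorrelated across cells even though the projector $\boldsymbol{\Pi}_\perp$ introduces slow-time correlation of order $1/N$; this is exactly why the statement is posed under the diagonal approximation, and I would state it as an assumption rather than derive it. Second, the target must be on-grid, so that the DFT phases cancel those of $\vec{s}$ exactly, and in a nonzero Doppler bin, where Lemma~\ref{lem:mti} gives $g_{\rm mti}=1$, so that MTI leaves the signal amplitude unchanged in every cell. With these two points stated, the proof reduces to the two Cauchy--Schwarz applications above.
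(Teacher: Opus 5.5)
Your proposal is correct and follows the paper's proof: the paper likewise writes the post-phase-compensation output SNR as $|a|^2(\sum_iw_i)^2/\sum_iw_i^2\psi_i$ and applies Cauchy--Schwarz to obtain both the optimal weights $w_i\propto1/\psi_i$ and the gain $G_{\rm w}$. The only difference is that you spell out the second Cauchy--Schwarz (AM--HM) step that gives $G_{\rm w}\geq1$ and its equality case, which the paper leaves implicit.
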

\begin{proof}
After steering-phase removal, the target amplitude is $a$ in every cell, so the output SNR is $|a|^2(\sum_iw_i)^2/\sum_iw_i^2\psi_i$. Cauchy--Schwarz gives $w_i\propto1/\psi_i$ and (\ref{eq:gain}).
\end{proof}
MTI induces slow-time correlation; hence (\ref{eq:gain}) is the diagonal-covariance prediction for on-grid targets, using the pre-MTI powers as a proxy. The unbiased $\widehat\psi=M\widehat m_2/(M-1)$ is second order, so fourth order is unnecessary for weighting. In population, $p_{\rm I}+p_w=\psi$; the clipped estimates give the same weights when $\widehat p_{\rm I}\leq\widehat\psi$ and agree numerically here. Fourth order instead provides identification, power decomposition, and presence testing.

\subsection{Aperture-Controlled Diagonal Loading}

Unregularized $1/\widehat\psi$ weighting is nevertheless unsafe. It is a nonuniform aperture taper, and for $w_i>0$ its efficiency $\varepsilon(w)=(\sum_iw_i)^2/(P\sum_iw_i^2)\in[1/P,1]$ can collapse as the disturbance dynamic range grows. For white disturbance with variance $\psi$, the weighted coherent output SNR is $\mathrm{SNR}_{w}=|a|^2(\sum_i w_i)^2/(\psi\sum_i w_i^2)$, while uniform weighting gives $\mathrm{SNR}_{\rm unif}=P|a|^2/\psi$. Therefore, $\mathrm{SNR}_{w}/\mathrm{SNR}_{\rm unif}=\varepsilon(w)$, and $-10\log_{10}\varepsilon(w)$ is the corresponding coherent-processing/taper loss in dB. In the selective-disturbance case, (\ref{eq:gain}) quantifies the interference-suppression gain, whereas $\varepsilon(w)$ quantifies the associated aperture/taper penalty. By Parseval, $\varepsilon$ is exactly the fraction of point-spread-function (PSF) energy in the peak bin, so low efficiency raises the PSF sidelobe floor and increases leakage into the CFAR training ring; transmit waveform design addresses the same sidelobe budget \cite{LiTWC2025}, whereas we act at the receiver. We therefore use $w_{\delta,i}=1/(\widehat\psi_i+\delta\bar\psi)$, where $\bar\psi=P^{-1}\sum_i\widehat\psi_i$ and, for $1/P\leq\varepsilon_0<1$, the smallest $\delta\geq0$ satisfying $\varepsilon(w_\delta)\geq\varepsilon_0$ is found by bisection on $\varepsilon(w_\delta)$, which increases numerically with $\delta$ and tends to one as $\delta\to\infty$. This loading retains most of the coherent gain while improving detection relative to uniform weighting (Fig.~\ref{fig:tradeoff}). Guard cells are sized using the loaded, not uniform, PSF.

\subsection{CFAR Detection}

Detection uses cell-averaging CFAR (CA-CFAR) processing on $|S_w[q,p]|^2$. With $N_T$ exponential training cells and desired false-alarm probability $P_{\rm FA}$, its threshold scale is $\alpha_{\rm CFAR}=N_T\big(P_{\rm FA}^{-1/N_T}-1\big)$.
Training cells are exactly exponential and independent only for a white circular Gaussian disturbance; after MTI, cellwise weighting, and the DFT of a QAM residual this holds only approximately, so $\alpha_{\rm CFAR}$ is used as a design scale and the realized false-alarm rate is reported in Section~\ref{sec:results}. Ordered-statistic CFAR (OS-CFAR) is included as a nonhomogeneous-background benchmark \cite{Barkat2005,Richards2014}. Thus the complete chain is reference removal $\rightarrow$ MTI $\rightarrow$ interference identification $\rightarrow$ loaded range-Doppler processing $\rightarrow$ CFAR.

\begin{figure}[t]
\centering
\includegraphics[width=0.98\columnwidth]{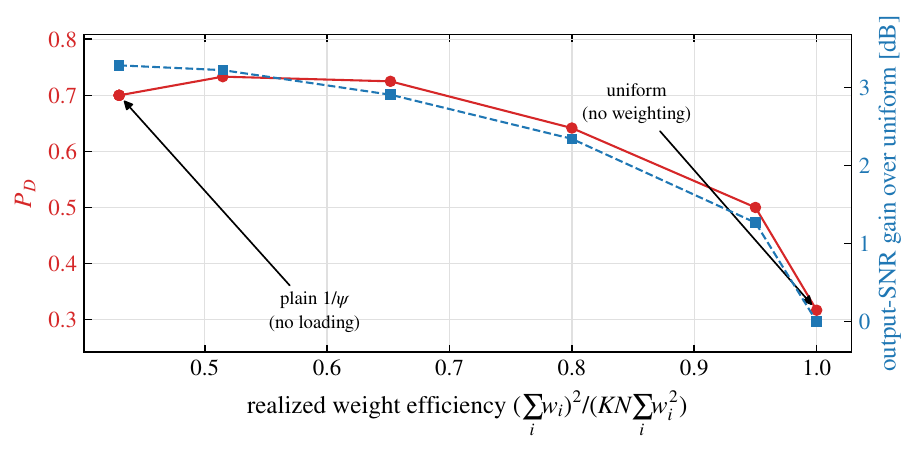}
\caption{Loading tradeoff among output gain, aperture efficiency, and detection probability $P_D$.}
\label{fig:tradeoff}
\end{figure}

\section{Numerical Results}
\label{sec:results}

We simulate the model in Section~\ref{sec:model} using the configuration in Table~\ref{tab:config}. The scenario contains off-grid moving targets, static clutter, and a frequency- and time-selective OFDM interferer; Fig.~\ref{fig:topology_geom} shows the node geometry. The per-resource-element SNR, signal-to-interference ratio (SIR), and signal-to-clutter ratio (SCR) are defined relative to a target echo; each experiment sweeps the relevant ratio and otherwise holds the reference point of Table~\ref{tab:config}. The CA-CFAR branches share one design, with OS-CFAR as a benchmark.

\begin{table}[t]
\caption{Simulation parameters: settings (left) and quantities derived from them (right), using $c=3\times10^8$~m/s.}
\label{tab:config}
\centering
\renewcommand{\arraystretch}{1.06}
\begin{tabular}{lc@{\hspace{1.3em}}lc}
\toprule
Parameter & Value & Parameter & Value\\
\midrule
$f_c$ & 28 GHz & $\lambda$ & 10.7 mm\\
$\Delta f$ & 120 kHz & $B$ & 30.72 MHz\\
$K,N$ & $256,64$ & $KN$ & 16384\\
$T_{\rm cp}/T$ & $1/8$ & CPI & 0.600 ms\\
Reference & QPSK & Range res. & 4.883 m\\
Interferer & 16-QAM & Velocity res. & 8.929 m/s\\
Interferer taps & 4 & CP range limit & 156.2 m\\
CFAR window & $21\times21$ & $N_T$ & 360\\
CFAR guard & $9\times9$ & $\alpha_{\rm CFAR}$ & 9.33\\
$P_{\rm FA}$ & $10^{-4}$ & $\eta$ & 1\\
$M,\varepsilon_0$ & $96,\,0.5$ & $\kappa_U$ & $-0.68$\\
SNR, SIR, SCR & $-18,-30,-20$ dB & $L,G$ & $2,6$\\
\bottomrule
\end{tabular}
\end{table}

\begin{figure}[t]
\centering
\includegraphics[width=0.8\columnwidth]{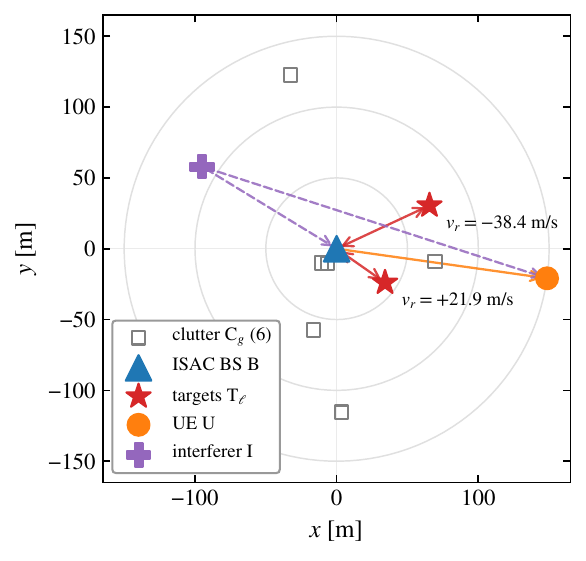}
\caption{Simulated geometry with two off-grid targets, six static clutter scatterers, and a co-channel interferer.}
\label{fig:topology_geom}
\end{figure}

\begin{figure*}[!t]
\centering
\includegraphics[width=\textwidth]{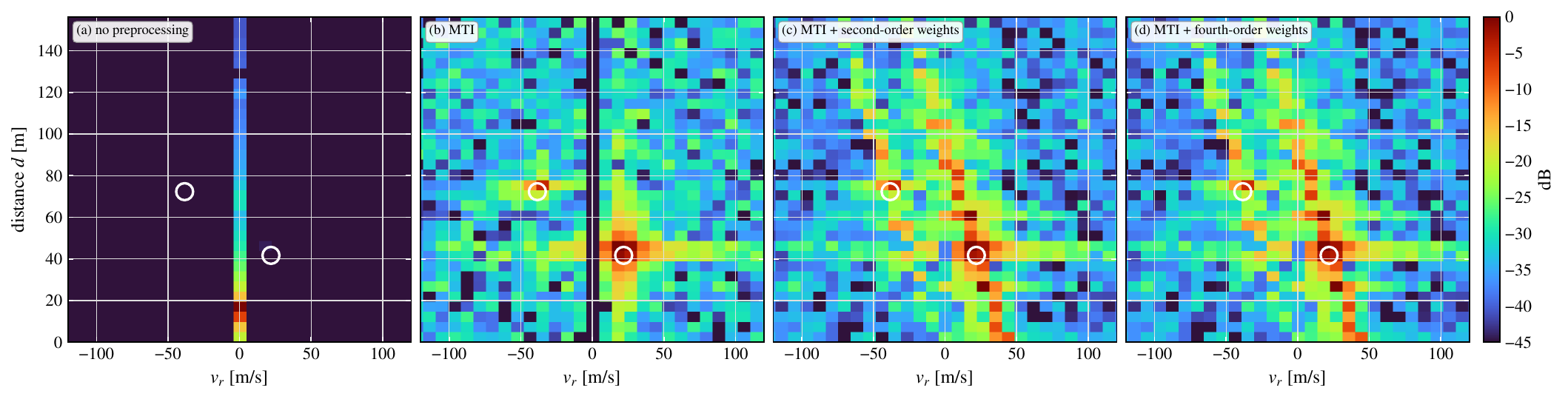}
\caption{Range-Doppler maps at SNR $=6$ dB, SIR $=-6$ dB, and SCR $=-32$ dB: (a) unprocessed; (b) MTI; (c) second-order weighting; (d) fourth-order weighting. Circles mark targets.}
\label{fig:rdmaps}
\end{figure*}

\begin{figure}[!t]
\centering
\subfloat[]{\includegraphics[width=\columnwidth]{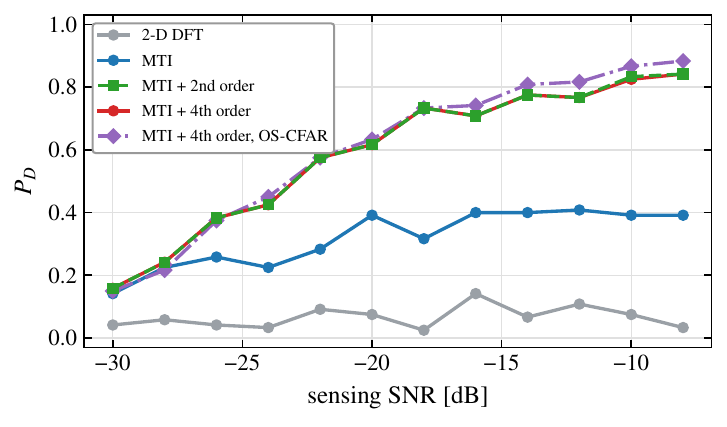}\label{fig:pd_snr}}\\[-0.5ex]
\subfloat[]{\includegraphics[width=\columnwidth]{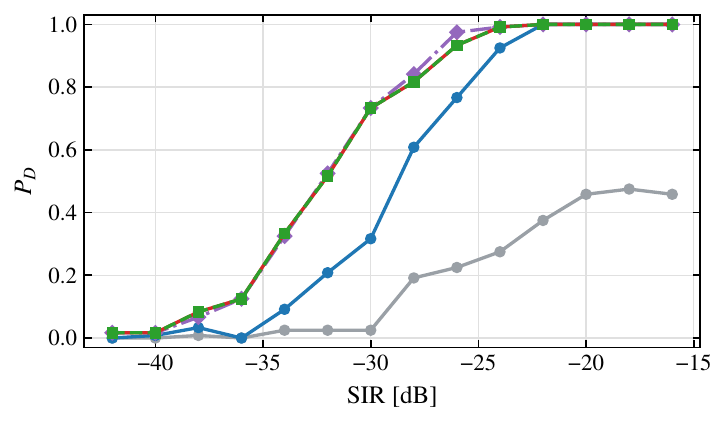}\label{fig:pd_sir}}
\caption{Detection probability versus (a) SNR at SIR $=-30$ dB and (b) SIR at SNR $=-18$ dB. All branches use CFAR.}
\label{fig:pd}
\end{figure}

The rank experiment confirms the structural distinction of Section~\ref{sec:model}: the interference, clutter, and target components have effective ranks $\min(K,N)$, one, and $L$. The measured MTI response follows Lemma~\ref{lem:mti}, reaching a numerical null for static clutter and attenuating sufficiently slow targets as predicted.

The maps in Fig.~\ref{fig:rdmaps} give a complementary view at a higher-contrast operating point. Before processing, the zero-Doppler clutter ridge and its sidelobes dominate. MTI removes that ridge without estimating clutter delays but leaves the spread co-channel component; target contrast improves only once the selective disturbance profile is used, confirming that the two stages are non-redundant.

The fourth-order experiments validate Lemma~\ref{lem:cumulant}: the recovered interference profile approaches the theoretical profile at the predicted $O(M^{-1/2})$ rate, while scaling the deterministic target and clutter amplitudes leaves the cumulant profile unchanged to numerical precision, as centering removes snapshot-constant echoes exactly. The additive step of Proposition~\ref{prop:bias} drives the normalized Gaussian-null mean to zero within Monte Carlo error, preventing finite-sample bias from mimicking interference, and omitting the shrinkage $B_M$ then inflates the noise median error by more than an order of magnitude; applying neither correction is accurate only by cancellation rather than by construction.

The interference-presence test (\ref{eq:problem_hyp}) is calibrated on interference-free frames. Averaging the corrected cumulant over the grid and normalizing by $\widehat\psi^{2}$ leaves a null law that is Gaussian and free of the disturbance scale, so a single threshold serves every noise level and no knowledge of $\sigma_w^2$ is needed. At the design false-alarm rate the test then declares interference far below the reference operating point of Table~\ref{tab:config}, and in particular well below the weakest interference at which weighting still changes $P_D$; its power falls off only once $T\propto(p_{\rm I}/p_w)^{2}$ sinks beneath the null spread. Second order cannot make this decision at any $I/N$, since $\widehat\psi$ measures total disturbance power without attributing it, which is the operational content of the fourth-order statistic.

Fig.~\ref{fig:pd} compares the CA-CFAR branches with no suppression, MTI alone, and estimated disturbance weighting. Loaded weighting more than doubles $P_D$ over MTI alone at the reference point, and the improvement persists as SNR and SIR are swept. The second- and fourth-order weight curves overlap, as predicted by (\ref{eq:psi}); the distinctive fourth-order benefit is interference/noise decomposition and presence testing. The realized false-alarm rate tracks the $P_{\rm FA}$ design value across all branches and SNR points.

Fig.~\ref{fig:tradeoff} isolates the PSF effect at the reference point, reusing Fig.~\ref{fig:pd}'s seed base and trial count so the two agree exactly rather than within Monte Carlo variance. Diagonal loading retains nearly all of the inverse-weighting gain while raising both aperture efficiency and $P_D$, so the headroom is almost free; forcing an efficiency floor near unity instead surrenders most of the gain and much of $P_D$.

\section{Conclusion}
\label{sec:conclusion}

In this paper, we developed a sensing-centric receiver for monostatic OFDM-ISAC in clutter and co-channel interference. Reference removal exposes complementary structures: deterministic zero-Doppler clutter and a random, non-Gaussian, generically full-rank interference residual. We therefore combined MTI with a centered diagonal fourth-order statistic, corrected its finite-sample bias for proper phase-aligned snapshots, and paired it with second-order information to separate interference from noise. Under a diagonal post-MTI covariance approximation, aperture-controlled loading retains most of the disturbance-suppression gain while enforcing an efficiency floor. Simulations confirm the predicted effective-rank distinction, show that interference estimates improve with coherent support, and demonstrate stronger target detection with loaded processing.

\bibliographystyle{IEEEtran}
\bibliography{references}

@misc{Triwidyastuti2026arxiv,
	title={{Communication-Centric RIS-Assisted ISAC: Signal Modeling and BER Analysis}},
	author={Yosefine Triwidyastuti and Tri Nhu Do},
	year={2026},
	eprint={2606.28924},
	archivePrefix={arXiv},
	primaryClass={eess.SP},
	url={https://arxiv.org/abs/2606.28924},
}

@misc{Nguyen_TWC_arXiv_2026,
  title         = "Probabilistic Denoising-enhanced {ISAC} for stochastic
                   cluttered mobile environments",
  author        = "Nguyen, Nghia Thinh and Do, Tri Nhu",
  month         =  jul,
  year          =  2026,
  archivePrefix = "arXiv",
  primaryClass  = "eess.SP",
  eprint        = "2607.26994",
    url={https://arxiv.org/abs/2607.26994}
}

@article{Thinh_TGCN_2026,
  author={Nghia Thinh Nguyen and Tri Nhu Do},
  journal={IEEE Trans. Green Commun. Netw.},
  title={{Generative and Explainable AI for High-Dimensional MIMO-OFDM Channel Estimation in Time-Frequency-Space Domain}},
  year={2026},
  volume={10},
  pages={3060--3075},
  doi={10.1109/TGCN.2026.3693617}
}

@article{MirabellaAccess2023,
  author={Michele Mirabella and Pasquale Di Viesti and Alessandro Davoli and Giorgio M. Vitetta},
  journal={IEEE Access},
  title={{Deterministic Signal Processing Techniques for OFDM-Based Radar Sensing: An Overview}},
  year={2023},
  volume={11},
  pages={68872--68889},
  doi={10.1109/ACCESS.2023.3292937}
}

@article{LuoTWC2024,
  author={Hongliang Luo and Yucong Wang and Dongqi Luo and Jianwei Zhao and Huihui Wu and Shaodan Ma and Feifei Gao},
  journal={IEEE Trans. Wireless Commun.},
  title={{Integrated Sensing and Communications in Clutter Environment}},
  year={2024},
  volume={23},
  number={9},
  pages={10941--10956},
  doi={10.1109/TWC.2024.3377184}
}

@article{MendelProcIEEE1991,
  author={Jerry M. Mendel},
  journal={Proc. IEEE},
  title={{Tutorial on Higher-Order Statistics (Spectra) in Signal Processing and System Theory: Theoretical Results and Some Applications}},
  year={1991},
  volume={79},
  number={3},
  pages={278--305},
  doi={10.1109/5.75086}
}

@BOOK{Barkat2005,
  title     = "Signal Detection and Estimation",
  author    = "Barkat, Mourad",
  publisher = "Artech House",
  series    = "Radar Library",
  edition   =  2,
  month     =  aug,
  year      =  2005,
  address   = "Norwood, MA",
}

@BOOK{Richards2014,
  title     = "Fundamentals of radar signal processing, second edition",
  author    = "Richards, Mark A",
  publisher = "McGraw-Hill Professional",
  edition   =  2,
  month     =  jan,
  year      =  2014,
  address   = "New York, NY",
  language  = "en"
}

@article{MengTWC2024,
  author={Kaitao Meng and Christos Masouros and Guangji Chen and Fan Liu},
  journal={IEEE Trans. Wireless Commun.},
  title={{Network-Level Integrated Sensing and Communication: Interference Management and BS Coordination Using Stochastic Geometry}},
  year={2024},
  volume={23},
  number={12},
  pages={19365--19381},
  doi={10.1109/TWC.2024.3450228}
}

@article{LiTWC2025,
  author={Peishi Li and Ming Li and Rang Liu and Qian Liu and A. Lee Swindlehurst},
  journal={IEEE Trans. Wireless Commun.},
  title={{MIMO-OFDM ISAC Waveform Design for Range-Doppler Sidelobe Suppression}},
  year={2025},
  volume={24},
  number={2},
  pages={1001--1015},
  doi={10.1109/TWC.2024.3503605}
}

\end{document}